\title{GAN-MC: a Variance Reduction Tool for Derivatives Pricing}
\author[1]{Weishi Wang\thanks{weishi@uchicago.edu}}
\affil[1]{\textit{Department of Statistics, The University of Chicago}}
\begin{document}

\maketitle
\section{Abstract}
We propose a parameter-free model for estimating the price or valuation of financial derivatives like options, forwards and futures using non-supervised learning networks and Monte Carlo. Although some arbitrage-based pricing formula performs greatly on derivatives pricing like Black-Scholes on option pricing, generative model-based Monte Carlo estimation(GAN-MC) will be more accurate and holds more generalizability when lack of training samples on derivatives, underlying asset's price dynamics are unknown or the no-arbitrage conditions can not be solved analytically. We analyze the variance reduction feature of our model and to validate the potential value of the pricing model, we collect real world market derivatives data and show that our model outperforms other arbitrage-based pricing models and non-parametric machine learning models. For comparison, we estimate the price of derivatives using Black-Scholes model, ordinary least squares, radial basis function networks, multilayer perception regression, projection pursuit regression and Monte Carlo only models.

\section{Introduction}
Financial derivatives are used for risk management, hedging, speculation, and arbitrage. Better understanding of pricing of derivatives could help traders better hedge against risk, and the price of derivatives could reflect the fluctuations on the underlying assets. Much of the success and growth of the market for options and other derivatives securities should be traced to the seminal work by Black and Scholes~\cite{black1976pricing} and Merton~\cite{merton1973theory}. They introduced the closed-form option pricing formulas through no-arbitrage conditions and dynamic hedging arguments. Such celebrated Black-Scholes and Merton formulas have been well generalized, extended and applied to various securities. Nicole, Monique and Steven~\cite{karoui1998robustness} provide conditions under which the Black–Scholes formula is robust with respect to a misspecification of volatility. Wu~\cite{wu2004pricing} introduces the fuzzy set theory to the Black–Scholes formula, which attaches belief degree on the European option. Marcin~\cite{magdziarz2009black} introduces a subdiffusive geometric Brownian motion to underlying asset prices' dynamics and tests the pricing model on prices of European option. Carmona and Valdo~\cite{carmona2005generalizing} generalize the Black-Scholes formula in all dimensions by approximate formulas and provide lower and upper bounds for hedging of multivariate contingent claims. Moreover, while closed-form expressions are not available in some generalizations and extensions, pricing formulas may still take effects numerically.
\par However, the derivation of the pricing formula via the hedging or no-arbitrage approach, either analytically or numerically, highly depends on the particular parametric form of the underlying asset's price dynamics. Thus the misspecification of the stochastic process will lead to system pricing and hedging errors for derivatives related to this price. Therefore, previous parametric pricing methods are closely tied to the ability of capturing the dynamics of underlying asset prices' process.
\par In this paper, we creatively introduce the generative model-based Monte Carlo estimation(GAN-MC) for derivatives pricing and hedging. We will not assume any specific dynamics on the underlying asset prices. We only treat the asset prices as simple multivariate random variables and try to approximate its distribution by a neural network. Then we get the pricing formula from derivatives' definition and Monte Carlo estimation for the statistical stability. Compared to the previous non-parametric pricing approach like Hutchinson~\cite{hutchinson1994nonparametric}, our model relies less on derivatives' regime and more stable with the advantage of Monte Carlo.
\par In order to better capture the dynamics of underlying asset prices through non-parametric approach, we introduce generative adversarial nets(GAN)~\cite{goodfellow2014generative} to approximate underlying asset prices' distribution. GAN is the framework for estimating generative models via an adversarial process. The celebrated neural network model has been widely generalized and extended. Zhang, Goodfellow, etc~\cite{zhang2019self} propose the Self-Attention Generative Adversarial Network which allows attention-driven and long-range dependency modeling for generation tasks. Mehdi and Simon~\cite{mirza2014conditional} introduce Conditional GAN, which uses additional information to direct the data generation process. Chen, Lin, etc~\cite{chen2020dggan} propose the Depth-image Guided GAN which adds some architectural constraints to network and generates realistic depth maps conditioned on input image. Martin and Soumith~\cite{arjovsky2017wasserstein} introduce the Wasserstein GAN which stabilize the training process by replacing the original metric by Wasserstein-1 distance. Chen, Duan, etc~\cite{chen2016infogan} propose InfoGAN, an information-theoretic extension to the generative adversarial net which is able to learn disentangled representations in a completely unsupervised manner by attempting to make conditional learned automatically.

\par
Monte Carlo could be used for option pricing under different underlying asset prices' dynamics assumptions. The original approach is raised by Boyle~\cite{boyle1977options}, he uses risk neutrality to obtain equilibrium rate of return on underlying assets and uses Monte Carlo to improve efficiency of estimation. Fu and Hu~\cite{fu1995sensitivity} introduce techniques for the sensitivity analysis of Monte Carlo option pricing and they propose an approach for the pricing of options with early exercise features. Birge~\cite{birge1995quasi} introduces quasi Monte Carlo sequences which have order of magnitude better asymptotic error rate and such sequences could be used in option pricing. Mark~\cite{broadie1997enhanced} presents several enhancements to reduce the bias as well as variance of Monte Carlo estimators and improve the efficiency of the branching based estimators. Poirot and Tankov~\cite{poirot2006monte} relate the underlying asset prices to the tempered stable (also known as CGMY) processes and under an appropriate equivalent probability measure a tempered stable process becomes a stable process, thus provide a fast Monte Carlo algorithm for European option pricing.

\par The attention on training on biased datasets is increasing in recent days. The work from Yo-whan Kim, Samarth Mishra, etc~\cite{kim2022transferable} raise the idea of pre-training on synthetic video data. Compared with directly training on real video clips data, the model will perform better on downstream tasks when pre-training on the synthetic or biased datasets. Our GAN-MC model's success on derivatives pricing could be analogous to their success. Estimation based on synthetic or fake underlying asset prices outperforms non-parametric models directly trained on real derivatives prices.

\subsection{Our Contributions}
We summarize the major contributions of our paper as follows:
\begin{itemize}
    \item We first introduce the generative model-based Monte Carlo estimation for derivatives pricing. We assume that the underlying asset prices follow multivariate random variable distribution and use GAN to approximate the distribution. Then we use Monte Carlo estimation to get pricing formula for each derivative: option, forward and futures.
    \item We get the consistent estimators for prices of different derivatives theoretically and validate the accuracy of our pricing algorithms on real market data. Compared with arbitrage-based pricing formula like Black-Scholes formula, non-parametric pricing models like radial basis function networks, multilayer perception regression, and projection pursuit regression, and some simple models like linear regression and Monte Carlo only models, our GAN-MC pricing model always reaches state-of-the-art on real market data.
\end{itemize}
\par We organize this paper as follows: In Section 3, we define the problem setup and some assumptions for data representation and training. In Section 4, we introduce our GAN-MC model for derivatives pricing, including option, forward and futures. We cover European call option, European put option, American call option and American put option in option pricing. And we cover commodity and equity for forward and futures pricing. We still prove the variance of estimator will decrease with the increase of generated sample size in this section. In Section 5, we conduct experiments to test the generated sample and test the accuracy of our algorithms. Compared with other models, our GAN-MC always reaches state-of-the-art on real market option, futures and forward data.

\section{Problem setup}
\begin{figure}[htbp]
\centering
\begin{tikzpicture}
\tikzset{
box/.style ={
rectangle, 
rounded corners =5pt, 
minimum width =30pt, 
minimum height =30pt, 
inner sep=5pt, 
draw=blue }};

\tikzset{
box2/.style ={
rectangle, 
rounded corners =5pt, 
minimum width =50pt, 
minimum height =20pt, 
inner sep=5pt, 
draw=blue }
}

\node[box] (1) at(0,0) {$\Omega_{1}$};
\node[box] (2) at(0,3) {$\mathbb{R}^{n}$};
\node  at(0,4.5) {$\times$};
\node[box] (4) at(0,6) {$\mathbb{R}^{m_{1}}$};
\node (5) at(3,0) {$(\Omega_{1},F_{1},\mathbb{P}_{Y})$};
\node[box] (6) at(3,3) {$\mathbb{R}^{m\times m}$};
\node[box] (7) at(6,3) {$\mathbb{R}^{m_{2}}$};
\node[box2] (8) at(6,6) {\{0,1\} or $\mathbb{R}$};
\node[box] (9) at(3,6) {$\Omega_{2}$};

\node (11) at(1.5,6) {$(\Omega_{2},F_{2},\mathbb{P}_{\gamma})$};
\node  at(4.5,3) {$\times$};

\node  at(0.2,1.5) {$Z$};
\node  at(2.8,1.5) {$Y=G_{\theta g}\circ Z$};
\node  at(1.5,3.2) {$G$};
\node  at(4,4.5) {$D$};

\node  at(2.8,4.5) {$\gamma$};
\draw[->] (1)--(2);
\draw[->] (2)--(6);
\draw[->] (1)--(6);
\draw[->] (6)--(8);
\draw[->] (9)--(6);

\end{tikzpicture}
\caption{GAN structure} \label{fig:GAN}
\end{figure}
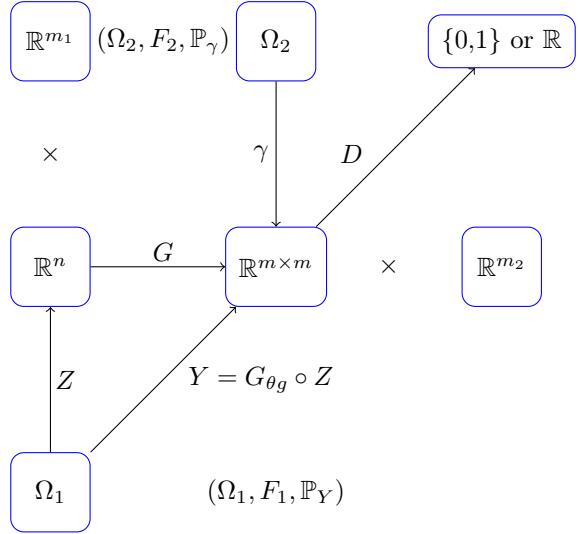
The pricing methods for financial derivatives are highly based on the prediction of underlying assets, like stock prices. The basic idea of Monte Carlo for derivatives pricing is about generating fake stock price samples. Similarly here, if we denote the stock price vector $\mathbf{S}_{t,T}=\left( S_{t},S_{t+1},\cdots,S_{t+T-1} \right)^{\top}\in\mathbb{R}_+^{T}$ as a multivariate random variable from time $t$ to time $t+T-1$, where $\mathbf{S}_{t,T} : \Omega \to \mathbb{R}_+^{T}$ is a measurable function mapping from sample space to $T$-dimensional positive real space. Then one stock price vector on real stock market $\mathbf{s}_{t,T}=\left(s_{t},s_{t+1},\cdots,s_{t+T-1} \right)$ would be a realization of $\mathbf{S}_{t,T}$.

\par For the generative adversarial nets, as shown in figure~\ref{fig:GAN}, we denote $G:\mathbb{R}^n \times \mathbb{R}^{m_1} \to \mathbb{R}^{m\times m}$ as the generator mapping, where $m_1$ is the number of parameters for generator network, $n$ is the dimension of random noise $\mathbf{Z}$, we denote $D:\mathbb{R}^T\times \mathbb{R}^{m_2} \to \{0,1\}$ as the discriminator mapping, where $m_2$ is the number of parameters for discriminator network, and we denote $\gamma:\Omega_{2} \rightarrow \mathbb{R}^{m \times m}$ is the real distribution mapping to image space. Then the training loss of GAN could be described as $\min_{G}\max_{D}M(\mathbb{P}_{Y},\mathbb{P}_{\gamma})$, where $M(\cdot,\cdot)$ is the metric between two measurable functions, $\mathbb{P}_{\gamma}$ is the probability measure of random variable $\gamma$, $\mathbb{P}_Y$ is the probability measure of random noise after mapping of generator $G$.

We first consider the dynamic structure of $\mathbf{S}_{t,T}$.

\begin{assump}[Date Independence]
\label{assump:date-indep}
If $T$ is a relative large number, 
the distribution of $\mathbf{S}_{t,T}$ does not depend on initial time point $t$, or equivalently the distribution of $\mathbf{S}_{t,T}$ could be written as the distribution of $\mathbf{S}_{T}$.
\end{assump}
\par Assumption~\ref{assump:date-indep} states that if $T$ is large, the high dimension distribution would be complex enough to include all the volatility on stock market within a period of time, like three years. GAN is a powerful tool to learn a specific high dimensional distribution from training set. Thus we can treat the stock price data as 1-D real distribution($\gamma$ in figure~\ref{fig:GAN}) under such assumption.  

\begin{assump}[Covariance Rank]
\label{assump:cov-rank}
The matrix rank of $\textrm{Cov}\left(\mathbf{S}_T\right)$ should not be too small.

\end{assump}
\par The successful training of GAN models requires that we should not train the generator on highly-correlated samples to avoid loss collapse. For a given $T$, The rank of $ \textrm{Cov}\left(\mathbf{S}_T\right)$ ranges from $1$ to $T$. Assumption~\ref{assump:cov-rank} states that the rank should not be close to 1, which guarantees the covariance structure of stock market data is complex enough for GAN to learn and the training process would not collapse at most time.

\section{Methodology}
In this section, we introduce the main algorithm, Generative Adversarial Nets-Monte Carlo(GAN-MC) model, for derivatives pricing like option, futures and forward pricing.

\subsection{GAN-MC for Option Pricing}\label{sec:option}
The theory of option pricing estimates the value of an option contract by assigning a price, known as premium, based on the calculated probability that the contract will finish in the money(ITM) at expiration. The option pricing theory provides an evaluation of an option's fair value, and the accurate pricing model could help traders better incorporate the option value into their strategies.
\par 
If we denote a specific underlying stock price data as $\mathbf{s}_{1,n}=(s_1,s_2,\cdots,s_{n})$ which starts at day 1 with length $n-1$, the continuously compounded annual risk-free interest rate as $r$, the value of a call or put option on a stock that pays no dividend as $C$ or $P$, the exercise price for the given option as $X$, the proportion of a year before the option expires as $T_0$ and $\Delta t$ is the time unit. And $T$ is a fixed parameter which satisfies Assumption~\ref{assump:date-indep} and Assumption~\ref{assump:cov-rank}. We set $N_1$ as the sample size threshold for training, and $N_2$ as the size of fake data generation. While $\alpha$ is a proportion parameter which controls the weights that different $\mathbf{s}_{t,T}$ contributes for the generation. Then algorithm~\ref{alg:GAN-MC} illustrates how to use GAN-MC on option pricing.

\begin{algorithm}[ht]
\caption{GAN-MC for option pricing}
\label{alg:GAN-MC}
\KwIn{ $\mathbf{s}_{1,n},r,X,T_0,T, N_1,N_2,\alpha$ }
\KwOut{Estimated call or put option price $\widehat{C}$ or $\widehat{P}$}
\For{$d=1,2,\cdots,T$}{
Partition the stock price data $\mathbf{s}_{1,n}$ into a training set $\mathcal{S}^n_{d,T}$ according to~\eqref{equ:seperation}\;
Train GAN on training set $\mathcal{S}^n_{d,T}$ and check training loss\;
\uIf{loss does not collapse and $\left|\mathcal{S}^n_{d,T}\right|\geq N_1$}{\algorithmicbreak}
}
\For{$i=1,2,\cdots,N_2$}{
Generate random noise $\mathbf{Z}_i$ and denote $\tilde{\mathbf{S}}^{(i)}_T = G(\mathbf{Z}_i)=(\tilde{s}^{(i)}_{n+1},\tilde{s}^{(i)}_{n+2},\cdots,\tilde{s}^{(i)}_{n+T})$ \;
Calculate $\textrm{tSim}\left(\tilde{\mathbf{S}}^{(i)}_T, \mathbf{s}_{n-T,T} \right) $ between two stock prices by~\eqref{equ:tsim} for each $i$\; 
}
Sort the list of similarities to form $\pi_{\alpha}=(p_{1},p_{2},\cdots,p_{N_2})$ such that $\textrm{tSim}\left(\tilde{\mathbf{S}}^{(p_i)}_T, \mathbf{s}_{n-T,T} \right)\leq \textrm{tSim}\left(\tilde{\mathbf{S}}^{(p_j)}_T, \mathbf{s}_{n-T,T} \right)$ for every $1\leq i\leq j\leq N_2$, take $\mathcal{I}_{\alpha} = \{p_{i}:i\geq \lceil \alpha N_{2}\rceil \}$ \;
Calculate $\widehat{C}$ or $\widehat{P}$ accordingly\;
\Return{$\widehat{C}$ or $\widehat{P}$}
\end{algorithm}
\par First if we have the historical stock price data $\mathbf{s}_{1,n}$, we need to separate the data into realizations of $\mathbf{S}_T$ to construct training set for the following missions. The training set $\mathcal{S}^n_{d,T}$ with the length of sliding window $d$ and given $T$ is defined as
\begin{equation}\label{equ:seperation}
\mathcal{S}^n_{d,T}=\left\{ \mathbf{s}_{1,T},\mathbf{s}_{1+d,T},\cdots,\mathbf{s}_{1+\lfloor\frac{n-T}{d}\rfloor d,T }
\right\}
\end{equation}
Obviously, when $d$ equals to $0$, all the realizations will be the same and the training process of GAN would fail and collapse. With the increase of $d$, the overlap proportion between different training sample will become smaller, which will make it much more harder for generator to deceive the discriminator. But the size of training set would become smaller at the same time. Such trade-off is considered during the iteration, we start from a small $d$ and keep checking the training loss of GAN, if the loss does not collapse and the training set is not so small, we keep the GAN model for the following generation.
\par Inspired from the work of GAN~\cite{goodfellow2014generative}, the 1-D optimization process of GAN here could be stated as
\begin{equation*}
\min_{G}\max_{D}\mathbb{E}_{\mathbf{s}_{t,T}\sim p(\mathbf{S}_T)}\left[\log D(\mathbf{s}_{t,T})\right] + 
\mathbb{E}_{\mathbf{Z}\sim p(\mathbf{Z})}\left[\log(1-D(G(\mathbf{Z}))) \right]
\end{equation*}where $p(\mathbf{S}_T)$ is the probability distribution of $\mathbf{S}_T$ and $p(\mathbf{Z})$ is the probability distribution of noise $\mathbf{Z}$. We design both the generator and discriminator by fully connected networks(FCNNs)~\cite{rosenblatt1961principles}. Instead of convolutional neural network, the dense layer could better capture the structure information on 1 dimensional data.
\par After training, the generated data $\{\tilde{\mathbf{S}}^{(i)}_T\}_{i=1}^{N_2}$ could be treated as fake stock prices which follow the distribution of $\mathbf{S}_T$. Under Assumption~\ref{assump:date-indep}, these fake stock prices could be treated as the predictions for the following $T$ days, which means we could denote $\tilde{\mathbf{S}}^{(i)}_T =(\tilde{s}^{(i)}_{n+1},\tilde{s}^{(i)}_{n+2},\cdots,\tilde{s}^{(i)}_{n+T})$. However, there will be endogenous variation within real world stock prices, which makes Assumption~\ref{assump:date-indep} hard to be completely satisfied. Obviously the price of a option should rely much more on recent stock prices, and the old stock price will contribute less to the option pricing. Similar to the work of Cassisi~\cite{cassisi2012similarity}, here we define the similarity of two time series $X=(x_1,x_2,\cdots,x_n),Y=(y_1,y_2,\cdots,y_n)$ by 
\begin{equation}\label{equ:tsim}
\textrm{tSim}\left(X,Y\right) = \frac{1}{n}\sum_{i=1}^n\left(1 - \frac{|x_i - y_i|}{|x_i|+|y_i|}
\right)
\end{equation}
and we calculate $\textrm{tSim}\left(\tilde{\mathbf{S}}^{(i)}_T, \mathbf{s}_{n-T,T} \right) $ for each $i$. Some of the generated fake stock prices will be similar to recent stock trend, while others may look completely different and are close to earlier stock data. Thus we rank the list of similarities to form a unique order $\pi_\alpha=(p_1,p_2,\cdots,p_{N_2})$ such that the similarities are placed in a increasing trend, for every $1\leq i \leq j\leq N_2$ we have $\textrm{tSim}(\tilde{\mathbf{S}}^{(p_i)}_T, \mathbf{s}_{n-T,T} )\leq \textrm{tSim}(\tilde{\mathbf{S}}^{(p_j)}_T, \mathbf{s}_{n-T,T} )$. For a proportion parameter $\alpha\in(0,1)$ we take the generated fake stock prices which are close to recent market trend $\mathcal{I}_\alpha = \{p_{i}:i\geq \lceil \alpha N_{2}\rceil \}$ as the candidate samples index set for Monte Carlo. 
\par The basic theory of option pricing relies on risk neutral valuation. According to the original Monte Carlo methods used on European option pricing~\cite{boyle1977options}, the contract holder can purchase the stock at a future date $\frac{T_0}{\Delta t}$ at a price $X$ agreed upon in the contract. The payoff function of a call option could be stated as $f(S)=\max(S-X,0)$ where $S$ is the stock price at expiration date. We need the investment payoff is equal to the compound total return obtained by investing the option premium $C$, for European call option
\begin{equation*}
    \frac{1}{|\mathcal{I}_\alpha|}\sum_{i\in |\mathcal{I}_\alpha|}f(\tilde{s}^{(i)}_{n+\frac{T_0}{\Delta t}})=(1+r\Delta t)^{\frac{T_0}{\Delta t}}C
\end{equation*}
and solving the equation we get the GAN-based Monte Carlo estimation for call option
\begin{equation}\label{eq:call_option}
    \widehat{C}=\left(1+r\Delta t \right)^{-\frac{T_0}{\Delta t}}\frac{1}{|\mathcal{I}_{\alpha}|}\sum_{i\in \mathcal{I}_{\alpha}}\max\left( \tilde{s}^{(i)}_{n+\frac{T_0}{\Delta t}} - X,0\right)
\end{equation}Similarly, the only difference for put option pricing lies on the payoff function. The put option is a contract giving the option buyer the right to sell a specified amount of an underlying security, therefore the payoff function for put option should be $f(S)=\max(X-S,0)$. And we get the GAN-based Monte Carlo estimation for European put option
\begin{equation}\label{eq:put_option}
    \widehat{P}=\left(1+r\Delta t \right)^{-\frac{T_0}{\Delta t}}\frac{1}{|\mathcal{I}_{\alpha}|}\sum_{i\in \mathcal{I}_{\alpha}}\max\left( X - \tilde{s}^{(i)}_{n+\frac{T_0}{\Delta t}} ,0\right)
\end{equation}
\par For American option, since the contract allows holders to exercise their right at any time before and including the expiration date, the equation of the call option should be changed to
\begin{equation*}
    C\leq \frac{1}{|\mathcal{I}_\alpha|}\sum_{i\in |\mathcal{I}_\alpha|}f(\tilde{s}^{(i)}_{n+\frac{T_0}{\Delta t}})\leq (1+r\Delta t)^{\frac{T_0}{\Delta t}}C
\end{equation*} and we get the lower and upper bound for American call option
\begin{equation*}
    \left(1+r\Delta t \right)^{-\frac{T_0}{\Delta t}}\frac{1}{|\mathcal{I}_{\alpha}|}\sum_{i\in \mathcal{I}_{\alpha}}\max\left( \tilde{s}^{(i)}_{n+\frac{T_0}{\Delta t}} - X,0\right)\leq  \widehat{C}\leq \frac{1}{|\mathcal{I}_{\alpha}|}\sum_{i\in \mathcal{I}_{\alpha}}\max\left( \tilde{s}^{(i)}_{n+\frac{T_0}{\Delta t}} - X,0\right)
\end{equation*}Similarly for American put option
\begin{equation*}
    \left(1+r\Delta t \right)^{-\frac{T_0}{\Delta t}}\frac{1}{|\mathcal{I}_{\alpha}|}\sum_{i\in \mathcal{I}_{\alpha}}\max\left( X - \tilde{s}^{(i)}_{n+\frac{T_0}{\Delta t}} ,0\right)\leq \widehat{P} \leq \frac{1}{|\mathcal{I}_{\alpha}|}\sum_{i\in \mathcal{I}_{\alpha}}\max\left( \tilde{s}^{(i)}_{n+\frac{T_0}{\Delta t}} - X,0\right)
\end{equation*}
\par After getting the lower and upper bound for American put option, we could take the average of the two bounds as the final pricing formula for American option.
\par One significant advantage for Monte Carlo estimation methods is that the variance of statistics will decrease with the increase on sample size. Lower variance is associated with lower risk for investors. Such advantage could be stated as the following theorem.

\begin{theorem}
Given $r, T_0, \alpha, X$, $\textrm{Var}(\widehat{C})$ or $\textrm{Var}(\widehat{P})$ would not increase with the increase of $N_2$.
\end{theorem}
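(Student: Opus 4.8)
The plan is to condition on the trained generator $G$ and the reference window $\mathbf{s}_{n-T,T}$, both of which are fixed once the historical data and the training loop of Algorithm~\ref{alg:GAN-MC} have been run, so that all remaining randomness comes from the i.i.d.\ noise draws $\mathbf{Z}_1,\dots,\mathbf{Z}_{N_2}$. Since each pair $\bigl(\textrm{tSim}(\tilde{\mathbf{S}}^{(i)}_T,\mathbf{s}_{n-T,T}),\,Y_i\bigr)$ with $Y_i=\max(\tilde{s}^{(i)}_{n+T_0/\Delta t}-X,0)$ is a fixed measurable function of $\mathbf{Z}_i$ alone, these pairs are i.i.d.\ across $i$. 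Writing $c=(1+r\Delta t)^{-T_0/\Delta t}$, both estimators have the common form $\widehat{C}=c\,|\mathcal{I}_\alpha|^{-1}\sum_{i\in\mathcal{I}_\alpha}Y_i$, and identically for $\widehat{P}$ with the put payoff, so it suffices to treat one case. I would thus reduce the theorem to controlling how the variance of a similarity-trimmed average of i.i.d.\ terms behaves as the pre-trim sample size $N_2$ grows.

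The clean engine of the argument is the monotonicity of the retained count. From the definition $\mathcal{I}_\alpha=\{p_i:i\ge\lceil\alpha N_2\rceil\}$ we have $m:=|\mathcal{I}_\alpha|=N_2-\lceil\alpha N_2\rceil+1$, and because $\alpha\in(0,1)$ the ceiling increment $\lceil\alpha(N_2+1)\rceil-\lceil\alpha N_2\rceil$ always lies in $\{0,1\}$; hence $m$ is non-decreasing in $N_2$, increasing by either $0$ or $1$ at each step. If the retained payoffs may be treated as (conditionally) i.i.d.\ with a common variance $\sigma^2$ that does not depend on $N_2$---which is the i.i.d.\ modeling premise underlying the whole construction, the generated samples being i.i.d.\ images of i.i.d.\ noise---then $\textrm{Var}(\widehat{C})=c^2\sigma^2/m$, and since $m$ is non-decreasing in $N_2$ this quantity is non-increasing in $N_2$, which is exactly the asserted claim.

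The main obstacle is to justify the constant-variance step rigorously, because the similarity trimming is not an independent thinning: the retention indicator for sample $i$ depends on all $N_2$ similarities through the empirical $\alpha$-quantile threshold $\tau$, and $\textrm{tSim}$ is correlated with the payoff $Y_i$, so the retained payoffs are neither exactly independent nor identically distributed, and their common law drifts with $N_2$ as the random threshold $\tau$ moves. I would attack this in two ways and keep whichever is cleaner. The first is to condition on the vector of similarity order statistics and apply the law of total variance, isolating a within-threshold term of the form $c^2\sigma^2(\tau)/m$ plus a between-threshold term, and then bound the second term's contribution. The second is an asymptotic stabilization argument, showing that the empirical $\alpha$-quantile $\tau$ of $\textrm{tSim}$ converges so that $\sigma^2(\tau)\to\sigma_\alpha^2$ and the exact count monotonicity $m\uparrow$ dominates the residual fluctuation. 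Controlling the coupling between the trimming threshold and the payoff is where essentially all the difficulty sits; the counting monotonicity and the reduction to a scaled sample mean are routine.
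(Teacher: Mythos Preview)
Your core approach---conditioning on the trained generator and the reference window so that the pairs $(\textrm{tSim}(\tilde{\mathbf{S}}^{(i)}_T,\mathbf{s}_{n-T,T}),Y_i)$ are i.i.d.\ in $i$, writing the estimator as $c\,m^{-1}\sum_{i\in\mathcal{I}_\alpha}Y_i$, and then invoking the deterministic monotonicity of $m=|\mathcal{I}_\alpha|=N_2-\lceil\alpha N_2\rceil+1$---is exactly the paper's argument. The paper simply asserts $\textrm{Var}(\widehat{C})\propto\sigma_1^2/|\mathcal{I}_\alpha|$ with $\sigma_1^2=\textrm{Var}(\max(G(\mathbf{Z}_i)_{n+T_0/\Delta t}-X,0))$ and concludes.

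Where you diverge is in rigor, not strategy: you correctly flag that the similarity ranking induces dependence among the retained indices and that the marginal law of a retained payoff can drift with the moving empirical $\alpha$-quantile, so the step ``retained payoffs are i.i.d.\ with variance independent of $N_2$'' is not automatic. The paper does not engage with this at all; it treats $\mathcal{I}_\alpha$ as if it were a deterministic subset of $[N_2]$ and the retained $Y_i$ as if they inherited the unconditional variance $\sigma_1^2$. Your two proposed attacks (conditioning on the similarity order statistics via the law of total variance, or an asymptotic quantile-stabilization argument) would genuinely strengthen the proof beyond what the paper provides, but neither is needed to match the paper: if you are content with the paper's level of rigor, you may stop after the $\sigma^2/m$ display and the monotonicity of $m$.
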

\begin{proof}
We design generator as fully connected network consisting of dense layers and activation layers, thus $G$ is a continuous function. And $\mathbf{Z}$ is a random noise, we generate $\{\mathbf{Z}_1,\mathbf{Z}_2,\cdots,\mathbf{Z}_{N_2}\}$ as independent and identically distributed random variables, if we denote $G(\mathbf{Z}_i)_j$ as the j-th element of $G(\mathbf{Z}_i)$, then the random variables $\max(G(\mathbf{Z}_i)_{n+\frac{T_0}{\Delta t}}-X,0) $ or $\max(X-G(\mathbf{Z}_i)_{n+\frac{T_0}{\Delta t}},0) $ are continuous functions of independent variables $\{\mathbf{Z}_i\}_{i=1}^{N_2}$. Therefore $\{\max(G(\mathbf{Z}_i)_{n+\frac{T_0}{\Delta t}}-X,0)\}_{i=1}^{N_2} $ are i.i.d. random variables and $\{ \max(X-G(\mathbf{Z}_i)_{n+\frac{T_0}{\Delta t}},0) \}_{i=1}^{N_2}$ are i.i.d. random variables. We know that $\mathcal{I}_\alpha\subseteq [N_2]$ and we take $\sigma^2_1 \coloneqq \textrm{Var}(\max(G(\mathbf{Z}_i)_{n+\frac{T_0}{\Delta t}}-X,0))$ and $\sigma^2_2 \coloneqq \textrm{Var}(\max(X-G(\mathbf{Z}_i)_{n+\frac{T_0}{\Delta t}},0) )$, then the variance of estimated call and put option could be stated as
\begin{equation*}
    \textrm{Var}(\widehat{C})\propto \frac{\sigma^2_1}{|\mathcal{I}_\alpha|} \quad \textrm{Var}(\widehat{P})\propto \frac{\sigma^2_2}{|\mathcal{I}_\alpha|}
\end{equation*}
therefore with the increase of $N_2$, the size of set $\mathcal{I}_{\alpha}$ will increase or stay the same. So if we keep $r, T_0, \alpha, X$ unchanged, $\textrm{Var}(\widehat{C})$ or $\textrm{Var}(\widehat{P})$ will decrease or stay the same.
\end{proof}

\subsection{GAN-MC for Forward and Futures Pricing}
A forward contract is a customized contract between two parties to buy or sell an asset at a specified price on a future date, and a futures contract is a standardized legal contract to buy or sell the underlying assets at a predetermined price for delivery at a specified time in the future. Forward contract is similar with futures, but settlement of forward contract takes place at the end of the contract, different with futures which settles on a daily basis. The underlying asset transacted is usually a commodity or financial instrument. Based on different commodities, securities, currencies or intangibles such as interest rates and stock indexes, the forward or futures could be categorized into markets like foreign exchange market, bond market, equity market and commodity market. Here we focus our pricing model on equity market and commodity market. And the valuation of equity forward or futures origins from a single stock, a customized basket of stocks or on an index of stocks, the valuation of commodity forward or futures depends on the cost of carry during the interim before delivery.
\subsubsection{Equity Market}
\par Similar to the setup in section~\ref{sec:option}, we denote the underlying stock or index price as $\mathbf{s}_{1,n}=(s_{1},s_{2},\cdots,s_{n})$, the value of equity forward or futures as $F^{\textrm{eq}}$, the historical data set of annual dividend per share of this stock as $\{D(t)\}_{t=1}^n$, the proportion of a year
before the delivery date as $T_0$ and $\Delta t$ is the time unit. Other parameters $r,T,N_1,N_2,\alpha$ are the same defined as section~\ref{sec:option}. We introduce algorithm~\ref{alg:GAN-MC future} to use GAN-MC on equity forward or futures pricing.

\begin{algorithm}[ht]
\caption{GAN-MC for equity futures/forward pricing}
\label{alg:GAN-MC future}
\KwIn{ $\mathbf{s}_{1,n},r,\{D(t)\}_{t=1}^n,T_0,T, N_1,N_2,\alpha$ }
\KwOut{Estimated equity futures or forward price $\widehat{F}^{\textrm{eq}}$}
\For{$d=1,2,\cdots,T$}{
Partition the stock price data $\mathbf{s}_{1,n}$ into a training set $\mathcal{S}^n_{d,T}$ according to~\eqref{equ:seperation}\;
Train GAN on training set $\mathcal{S}^n_{d,T}$ and check training loss\;
\uIf{loss does not collapse and $\left|\mathcal{S}^n_{d,T}\right|\geq N_1$}{\algorithmicbreak}
}
\For{$i=1,2,\cdots,N_2$}{
Generate random noise $\mathbf{Z}_i$ and denote $\tilde{\mathbf{S}}^{(i)}_T = G(\mathbf{Z}_i)=(\tilde{s}^{(i)}_{n+1},\tilde{s}^{(i)}_{n+2},\cdots,\tilde{s}^{(i)}_{n+T})$ \;
Calculate $\textrm{tSim}\left(\tilde{\mathbf{S}}^{(i)}_T, \mathbf{s}_{n-T,T} \right) $ between two stock or index prices by~\eqref{equ:tsim} for each $i$\; 
}
Sort the list of similarities to form $\pi_{\alpha}=(p_{1},p_{2},\cdots,p_{N_2})$ such that $\textrm{tSim}\left(\tilde{\mathbf{S}}^{(p_i)}_T, \mathbf{s}_{n-T,T} \right)\leq \textrm{tSim}\left(\tilde{\mathbf{S}}^{(p_j)}_T, \mathbf{s}_{n-T,T} \right)$ for every $1\leq i\leq j\leq N_2$, take $\mathcal{I}_{\alpha} = \{p_{i}:i\geq \lceil \alpha N_{2}\rceil \}$ \;
Fit a linear model $D(t)=at+b+\epsilon_t$ on set $\{D(t)\}_{t=1}^n$ where $\{\epsilon_t\}_{t=1}^n$ is the set of random noise and predict the annual dividend per share by $\widehat{D}(n+\frac{T_0}{\Delta t})=\hat{a}(n+\frac{T_0}{\Delta t})+\hat{b}$\;
Calculate 
\begin{equation}\label{eq:future_formula}
\widehat{F}^{\textrm{eq}}=s_n \cdot \exp{\left\{ \left(r-\frac{1}{|\mathcal{I}_\alpha|}\sum_{i\in\mathcal{I}_\alpha}\frac{\widehat{D}(n+\frac{T_0}{\Delta t})}{\tilde{s}^{(i)}_{n+\frac{T_0}{\Delta t}}} \right)T_0 \right\}}
\end{equation}
\Return{$\widehat{F}^{\textrm{eq}}$}
\end{algorithm}
\par The first step for equity market pricing is the same with option pricing. We need to partition the stock or index data into training set $\mathbf{S}^n_{d,T}$ with a proper sliding window $d$, which could make the GAN trained successfully with the given $T$. Similarly the stock or index price at different time $t$ will contribute differently for forward or futures pricing. We still use the rank of similarities between generated fake stock or index data $\{\tilde{\mathbf{S}}^{(i)}_T\}_{i=1}^{N_2}$ and $\mathbf{s}_{n-T,T}$ to control the effects of training sample at different time. Equity forward or futures prices are usually quoted in the same way as equity prices quoted in the underlying cash market by exchanges. And a pricing model is mainly used to calculate risk for a future contract, although it is utilized for computing both price and risk for a forward. The theoretical value of a equity forward or futures depends on the dividend model assumption~\cite{quail2009financial}, under dividend yield assumption, the theoretical equity forward's or futures' price is given by 
\begin{equation}\label{eq:futures-price}
    F^{\textrm{eq}}_{\tau} = s_t \exp{\left\{ \left(r-\frac{D(\tau)}{s_{\tau}}\right)(\tau - t) \right\}} 
\end{equation}Where $F^{\textrm{eq}}_{\tau}$ denotes the forward or futures price at delivery date $\tau$, $s_t$ and $s_\tau$ denote the stock or index price at time $\tau,t$ and $D(\tau)$ means the annual dividend per share at time $\tau$.
\par Given the historical annual dividend per share data $\{D(t)\}_{t=1}^{n}$, we need to predict the annual dividend per share at time $n+\frac{T_0}{\Delta t}$ which is the delivery date for the equity forward or futures. There are lots of methods for prediction and here we just consider the simple linear model
\begin{equation*}
    D(t)=at+b+\epsilon_t \quad \epsilon_t \overset{\text{i.i.d.}}{\sim}\mathcal{N}(0,\sigma^2) \quad t\in[n]
\end{equation*}Where $\epsilon_t$ denotes random noise and we use least squares estimation to get $\hat{a}=\frac{\sum_{t=1}^n(t-\bar{t})(D(t)-\overline{D(t)})}{\sum_{t=1}^n(t-\bar{t})^2}$ and $\hat{b}=\overline{D(t)} - \hat{a}\bar{t}$ where $\bar{t}=\frac{1}{n}\sum_{t=1}^n t$ and $\overline{D(t)}=\frac{1}{n}\sum_{t=1}^n D(t)$. Therefore we predict the annual dividend per share at the delivery date as $\widehat{D}(n+\frac{T_0}{\Delta t})=\hat{a}(n+\frac{T_0}{\Delta t})+\hat{b}$. 
\par Given the annual dividend per share, the dividend yield estimation by GAN-based Monte Carlo should be $\frac{1}{|\mathcal{I}_\alpha|}\sum_{i\in\mathcal{I}_\alpha}\frac{\widehat{D}(n+\frac{T_0}{\Delta t})}{\tilde{s}^{(i)}_{n+\frac{T_0}{\Delta t}}}$. Put it into the equity futures pricing formula~\ref{eq:futures-price} we get the GAN-based Monte Carlo estimation formula~\ref{eq:future_formula}.
\par Similar to the variance reduction in our method on option pricing. We could still reduce the variance of estimated prices. 
\begin{theorem}
Given $r, T_0, \alpha, \{D(t)\}_{t=1}^n$, $\textrm{Var}(\widehat{F}^{\textrm{eq}})$ would not increase with the increase of $N_2$.
\end{theorem}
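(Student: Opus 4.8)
The plan is to follow the proof of the option‑pricing theorem almost verbatim, adding one step to deal with the nonlinearity introduced by the exponential in~\eqref{eq:future_formula}. First I would identify the source of randomness: once $r,T_0,\alpha$ and the historical series $\{D(t)\}_{t=1}^n$ are fixed, the least‑squares coefficients $\hat a,\hat b$ and hence $\widehat D(n+\tfrac{T_0}{\Delta t})$ are deterministic constants, and $s_n$ is a fixed observed number, so the only randomness in $\widehat{F}^{\textrm{eq}}$ comes from the i.i.d.\ generator noise $\mathbf Z_1,\dots,\mathbf Z_{N_2}$. Since $G$ is a composition of dense and activation layers it is continuous, and the coordinate $\tilde s^{(i)}_{n+\frac{T_0}{\Delta t}}=G(\mathbf Z_i)_{n+\frac{T_0}{\Delta t}}$ is a.s.\ positive (prices live in $\mathbb R_+$), so the variables $Y_i\coloneqq \widehat D(n+\tfrac{T_0}{\Delta t})/\tilde s^{(i)}_{n+\frac{T_0}{\Delta t}}$ are i.i.d.; write $\mu\coloneqq\mathbb E[Y_1]$ and $\sigma^2\coloneqq\textrm{Var}(Y_1)$ (assumed finite). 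Exactly as in the previous proof, treating the selected index set $\mathcal I_\alpha\subseteq[N_2]$ as fixed, $\bar Y\coloneqq\frac1{|\mathcal I_\alpha|}\sum_{i\in\mathcal I_\alpha}Y_i$ is an average of $|\mathcal I_\alpha|$ i.i.d.\ copies, and $\widehat{F}^{\textrm{eq}}=c\,g(\bar Y)$ with $c\coloneqq s_n e^{rT_0}$ and the smooth convex map $g(y)\coloneqq e^{-T_0 y}$.

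The second step pushes this sample mean through $g$ by a first‑order Taylor (delta‑method) expansion about $\mu$: since $\bar Y$ concentrates at $\mu$ as $|\mathcal I_\alpha|$ grows, $\textrm{Var}(g(\bar Y))=\bigl(g'(\mu)\bigr)^2\textrm{Var}(\bar Y)+o\!\bigl(1/|\mathcal I_\alpha|\bigr)$, which gives
\begin{equation*}
\textrm{Var}\bigl(\widehat{F}^{\textrm{eq}}\bigr)\;\propto\;\frac{s_n^2\,T_0^2\,e^{2T_0(r-\mu)}\,\sigma^2}{|\mathcal I_\alpha|}
\end{equation*}
to leading order. Then, just as in the option case, $|\mathcal I_\alpha|=N_2-\lceil\alpha N_2\rceil+1$ is non‑decreasing in $N_2$ — incrementing $N_2$ raises it by $0$ or $1$ — so the right‑hand side, hence $\textrm{Var}(\widehat{F}^{\textrm{eq}})$, does not increase with $N_2$.

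The main obstacle is that, unlike $\widehat C$ and $\widehat P$, the estimator $\widehat{F}^{\textrm{eq}}$ is a \emph{nonlinear} function of the sample mean, so there is no exact identity $\textrm{Var}(\widehat{F}^{\textrm{eq}})=\textrm{const}/|\mathcal I_\alpha|$ and the argument above is only asymptotic in $|\mathcal I_\alpha|$. To make it exact I would instead write $\textrm{Var}(\widehat{F}^{\textrm{eq}})=c^2\bigl(\psi(2T_0/m)^m-\psi(T_0/m)^{2m}\bigr)$, where $\psi(s)\coloneqq\mathbb E[e^{-sY_1}]$ and $m\coloneqq|\mathcal I_\alpha|$: each of the two terms is individually non‑increasing in $m$ (by Jensen applied to the representation of $\bar Y_m$ as an average of the $m$ overlapping sub‑averages $\bar Y_{m-1}^{(i)}$, i.e.\ a reverse‑submartingale argument), and a short cumulant expansion shows their difference equals $c^2 e^{-2T_0\mu}T_0^2\sigma^2/m+O(1/m^2)$. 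Verifying that this difference of two non‑increasing sequences is itself non‑increasing is the only genuinely delicate point; at the level of rigor used for the preceding theorem the delta‑method statement already suffices, so I would present that as the proof and relegate the exact monotonicity to a remark.
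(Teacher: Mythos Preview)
Your setup matches the paper's proof exactly: you both fix the deterministic quantities, observe that $G$ is continuous so the $Y_i$ (the paper calls them $X_i$) are i.i.d.\ with some mean $\mu$ and variance $\sigma^2$, and conclude that $\textrm{Var}(\bar Y)=\sigma^2/|\mathcal I_\alpha|$ is non-increasing in $N_2$ because $|\mathcal I_\alpha|$ is non-decreasing. Where you diverge is precisely at the step you flagged as the obstacle: the paper does \emph{not} perform any delta-method or Taylor argument to pass through the exponential. It simply ends with the sentence ``If $\mathbb{E}(\widehat{F}^{\textrm{eq}})$ keeps unchanged, $\textrm{Var}(\widehat{F}^{\textrm{eq}})$ will decrease or stay the same,'' i.e.\ it asserts the conclusion under an extra hypothesis that is neither stated in the theorem nor generally true (by Jensen, $\mathbb{E}[\exp(-T_0\bar Y)]$ does move with $|\mathcal I_\alpha|$). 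So your proposal is strictly more careful than the paper's own proof: the first-order delta-method bound you give is the natural justification the paper omits, and your honest acknowledgement that exact finite-sample monotonicity of $m\mapsto\psi(2T_0/m)^m-\psi(T_0/m)^{2m}$ is delicate is well placed---the paper does not attempt it either. At the paper's level of rigor your delta-method paragraph alone already exceeds what is presented there; the MGF/reverse-submartingale refinement can safely stay a remark.
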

\begin{proof}
We design generator as fully connected network consisting of dense layers and activation layers, thus $G$ is a continuous function. And $\mathbf{Z}$ is a random noise, we generate $\{\mathbf{Z}_1,\mathbf{Z}_2,\cdots,\mathbf{Z}_{N_2}\}$ as independent and identically distributed random variables, thus random variables set $\{X_{i}=\frac{\widehat{D}(n+\frac{T_0}{\Delta t})}{G(\mathbf{Z}_{i})_{n+\frac{T_0}{\Delta t}}} \}_{i\in\mathcal{I}_\alpha}$ are collections of independent and identically distributed random variables if $\{D(t)\}_{t=1}^n$ is fixed because $X_i$ is continuous functions of independent variables $\{\mathbf{Z}_i\}_{i=1}^{N_2}$. We denote $\mu_0$ as the mean of $X_{i}$ and $\sigma^2_0$ as the variance of $X_{i}$, then
\begin{equation*}
\mathbb{E}\left(\frac{1}{|\mathcal{I}_\alpha|}\sum_{i\in\mathcal{I}_{\alpha}}X_i\right)=\mu_0 \quad \textrm{Var}\left(\frac{1}{|\mathcal{I}_\alpha|}\sum_{i\in\mathcal{I}_\alpha}X_i \right)=\frac{1}{|\mathcal{I}_\alpha|}\sigma^2_0
\end{equation*}And with the increase of $N_2$, $\mathcal{I}_\alpha \subseteq [N_2]$, $|\mathcal{I}_\alpha|$ will increase or stay the same, and $\text{Var}(\frac{1}{|\mathcal{I}_\alpha|}\sum_{i\in\mathcal{I}_\alpha}X_i)$ will not increase. If $\mathbb{E}(\widehat{F}^{\textrm{eq}})$ keeps unchanged, $\text{Var}(\widehat{F}^{\textrm{eq}})$ will decrease or stay the same.
\end{proof}
\subsubsection{Commodity Market}
For commodity forward contract or futures, the underlying asset could usually be divided into food, energy and materials. Similar to the parameter setup in section~\ref{sec:option}, we denote the underlying commodity spot price as $\mathbf{s}_{1,n}=(s_{1},s_{2},\cdots,s_{n})$, the average cost of carry from time $t$ to $\tau$ as $P(t,\tau)$, the historical commodity forward contract or futures price as $\{F^{\textrm{co}}_t\}_{t=n-N_3}^{n}$, the proportion of a year before the delivery date as $T_0$ and $\Delta t$ is the time unit. Other parameters $r,T,N_1,N_2,\alpha$ are the same defined as section~\ref{sec:option}. We introduce algorithm~\ref{alg:GAN-MC co} to use GAN-MC on commodity forward or futures pricing.

\begin{algorithm}[ht]
\caption{GAN-MC for commodity futures/forward pricing}
\label{alg:GAN-MC co}
\KwIn{ $\mathbf{s}_{1,n},r,\{F^{\textrm{co}}_t\}_{t=n-N_3}^{n},T_0,T, N_1,N_2,N_3,\alpha$ }
\KwOut{Estimated commodity forward or futures price $\widehat{F}^{\textrm{co}}$}
\For{$d=1,2,\cdots,T$}{
Partition the spot price data $\mathbf{s}_{1,n}$ into a training set $\mathcal{S}^n_{d,T}$ according to~\eqref{equ:seperation}\;
Train GAN on training set $\mathcal{S}^n_{d,T}$ and check training loss\;
\uIf{loss does not collapse and $\left|\mathcal{S}^n_{d,T}\right|\geq N_1$}{\algorithmicbreak}
}
\For{$i=1,2,\cdots,N_2$}{
Generate random noise $\mathbf{Z}_i$ and denote $\tilde{\mathbf{S}}^{(i)}_T = G(\mathbf{Z}_i)=(\tilde{s}^{(i)}_{n+1},\tilde{s}^{(i)}_{n+2},\cdots,\tilde{s}^{(i)}_{n+T})$ \;
Calculate $\textrm{tSim}\left(\tilde{\mathbf{S}}^{(i)}_T, \mathbf{s}_{n-T,T} \right) $ between two spot prices by~\eqref{equ:tsim} for each $i$\; 
}
Sort the list of similarities to form $\pi_{\alpha}=(p_{1},p_{2},\cdots,p_{N_2})$ such that $\textrm{tSim}\left(\tilde{\mathbf{S}}^{(p_i)}_T, \mathbf{s}_{n-T,T} \right)\leq \textrm{tSim}\left(\tilde{\mathbf{S}}^{(p_j)}_T, \mathbf{s}_{n-T,T} \right)$ for every $1\leq i\leq j\leq N_2$, take $\mathcal{I}_{\alpha} = \{p_{i}:i\geq \lceil \alpha N_{2}\rceil \}$ \;
Estimate cost of carry $\widehat{P}\left(n,n+\frac{T_0}{\Delta t}\right)$ by formula~\eqref{eq:co_p}\;
Calculate $\widehat{F}^{\textrm{co}}$ from formula ~\eqref{eq:co_price}\;
\Return{$\widehat{F}^{\textrm{co}}$}
\end{algorithm}

\par Similar to the equity pricing formula~\eqref{eq:futures-price}, the theoretical commodity forward price is based on its current spot price, plus the cost of carry during the interim before delivery~\cite{black1976pricing}. The simple commodity forward contract or futures could be expressed as
\begin{equation*}
    F_{\tau}^{\textrm{co}} = (s_t + P(t,\tau))\exp{\{r(\tau - t)\}}
\end{equation*}Where $F^{\textrm{co}}_{\tau}$ denotes the forward or futures price at delivery date $\tau$, $s_t$ denotes the commodity spot price at time $t$. Then if we use $s_{\tau}$ to replace the term $s_{t}\exp{r(\tau - t)}$, the price of commodity forward or futures would become $F_{\tau} = s_{\tau} + P(t,\tau)\exp{\{r(\tau - t)\}}$. Like the previous notation in algorithms, for pricing the commodity forward or futures at time $n+\frac{T_0}{\Delta t}$, we use empirical estimation of $P\left(n,n+\frac{T_0}{\Delta t}\right)$
\begin{equation}\label{eq:co_p}
\widehat{P}\left(n,n+\frac{T_0}{\Delta t}\right) = \frac{1}{N_3+1}\sum_{j=0}^{N_3}\left[\frac{F^{\textrm{co}}_{n-j}}{\exp{(rT_0)}} - s_{n-j}
\right]
\end{equation}Where $N_3+1$ is the sample size for empirical estimation. Then we use GAN-MC to estimate $\tilde{s}_{n+\frac{T_0}{\Delta t}}$. Analog to equity futures or forward pricing formula~\ref{eq:future_formula}, the commodity futures or forward pricing formula would be
\begin{equation}\label{eq:co_price}
    \widehat{F}^{\textrm{co}} = \frac{1}{|\mathcal{I}_\alpha|}\sum_{i\in \mathcal{I}_\alpha}\tilde{s}^{(i)}_{n+\frac{T_0}{\Delta t}}+\widehat{P}\left(n,n+\frac{T_0}{\Delta t}\right)\exp{(rT_{0})}
\end{equation}
\par Similar to the variance reduction theorem in equity forward or futures pricing, we claim
\begin{theorem}
Given $r, T_0, \alpha, \{F^{\textrm{co}}_t\}_{t=n-N_3}^{n}$ fixed, $\textrm{Var}(\widehat{F}^{\textrm{co}})$ would not increase with the increase of $N_2$.
\end{theorem}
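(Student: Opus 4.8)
The plan is to follow the template of the two preceding variance-reduction theorems, exploiting the fact that the formula~\eqref{eq:co_price} for $\widehat{F}^{\textrm{co}}$ splits into a random part and a deterministic part. First I would observe that the term $\widehat{P}\!\left(n,n+\tfrac{T_0}{\Delta t}\right)\exp(rT_0)$ is, by~\eqref{eq:co_p}, a fixed function of $r$, $T_0$ and $\{F^{\textrm{co}}_t\}_{t=n-N_3}^{n}$ alone, hence a constant under the hypotheses of the theorem; adding a constant does not change variance, so $\textrm{Var}(\widehat{F}^{\textrm{co}}) = \textrm{Var}\!\big(\tfrac{1}{|\mathcal{I}_\alpha|}\sum_{i\in\mathcal{I}_\alpha}\tilde{s}^{(i)}_{n+T_0/\Delta t}\big)$.

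Next, since the generator $G$ is a composition of dense and activation layers and is therefore continuous, and the noises $\{\mathbf{Z}_i\}_{i=1}^{N_2}$ are i.i.d., the coordinates $\tilde{s}^{(i)}_{n+T_0/\Delta t} = G(\mathbf{Z}_i)_{n+T_0/\Delta t}$ form an i.i.d. collection; writing $\sigma^2$ for their common variance (finite once we assume $G$ has square-integrable output, e.g. a bounded output layer), the empirical mean over $\mathcal{I}_\alpha$ has variance proportional to $\sigma^2/|\mathcal{I}_\alpha|$. I would then record that $|\mathcal{I}_\alpha|$ is nondecreasing in $N_2$: by construction $\mathcal{I}_\alpha = \{p_i : i \ge \lceil \alpha N_2\rceil\}$, so $|\mathcal{I}_\alpha| = N_2 - \lceil \alpha N_2\rceil + 1$, and incrementing $N_2$ by one raises $\lceil \alpha N_2\rceil$ by at most one, so $|\mathcal{I}_\alpha|$ weakly increases. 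Holding $\mathbb{E}(\widehat{F}^{\textrm{co}})$ fixed, it follows that $\textrm{Var}(\widehat{F}^{\textrm{co}})$ weakly decreases, which is the claim.

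The one delicate point — the same one latent in the two companion proofs — is that $\mathcal{I}_\alpha$ is produced by \emph{ranking} the similarities $\textrm{tSim}\big(\tilde{\mathbf{S}}^{(i)}_T,\mathbf{s}_{n-T,T}\big)$, so the summands kept in $\mathcal{I}_\alpha$ are an order-statistic-selected subsample rather than a fresh i.i.d. sample, and $\textrm{Var}\big(\tfrac1{|\mathcal{I}_\alpha|}\sum_{i\in\mathcal{I}_\alpha}\tilde{s}^{(i)}\big)$ is not literally $\sigma^2/|\mathcal{I}_\alpha|$. I expect this to be the main obstacle, and I would address it by conditioning: the pairs $\big(\tilde{s}^{(i)}_{n+T_0/\Delta t},\,\textrm{tSim}^{(i)}\big)$ are i.i.d. and hence exchangeable, so conditionally on the retained rank positions the kept values are i.i.d. from the induced upper-tail conditional law, whose variance still scales the empirical-mean variance by $1/|\mathcal{I}_\alpha|$; alternatively, at the level of rigor of the companion proofs, one simply treats the retained sample as i.i.d. with that conditional law. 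Finiteness of the relevant variance should be stated as a standing assumption on $G$, after which the monotonicity argument above closes the proof.
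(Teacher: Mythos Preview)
Your proposal is correct and is exactly the approach the paper intends: the paper does not spell out a proof for this theorem but prefaces it with ``Similar to the variance reduction theorem in equity forward or futures pricing, we claim,'' i.e.\ it defers to the template of the Theorem~2 proof, which is precisely what you reproduce (split off the deterministic carry term, use i.i.d.\ structure of the $G(\mathbf{Z}_i)$ to get variance $\propto \sigma^2/|\mathcal{I}_\alpha|$, then monotonicity of $|\mathcal{I}_\alpha|$ in $N_2$). Your discussion of the order-statistic selection issue in $\mathcal{I}_\alpha$ is more careful than anything the paper itself addresses; the paper's companion proofs simply treat the retained sample as i.i.d., so at that level of rigor your argument already matches and slightly exceeds the paper.
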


\section{Experiments}
\subsection{Stock and Index Price Prediction}
The accuracy of Monte Carlo pricing models is highly based on the accuracy and variation of prediction on underlying assets. Equation~\eqref{eq:call_option}\eqref{eq:put_option}\eqref{eq:future_formula} and\eqref{eq:co_price} have shown the pricing results are directly correlated to the generated stock prices in the future. Thus before we test our pricing models on real-world market data, we first check the generated fake stock prices tracks after GAN training.
\begin{figure}[htbp]
  \centering
  \includegraphics[width=1.0\textwidth]{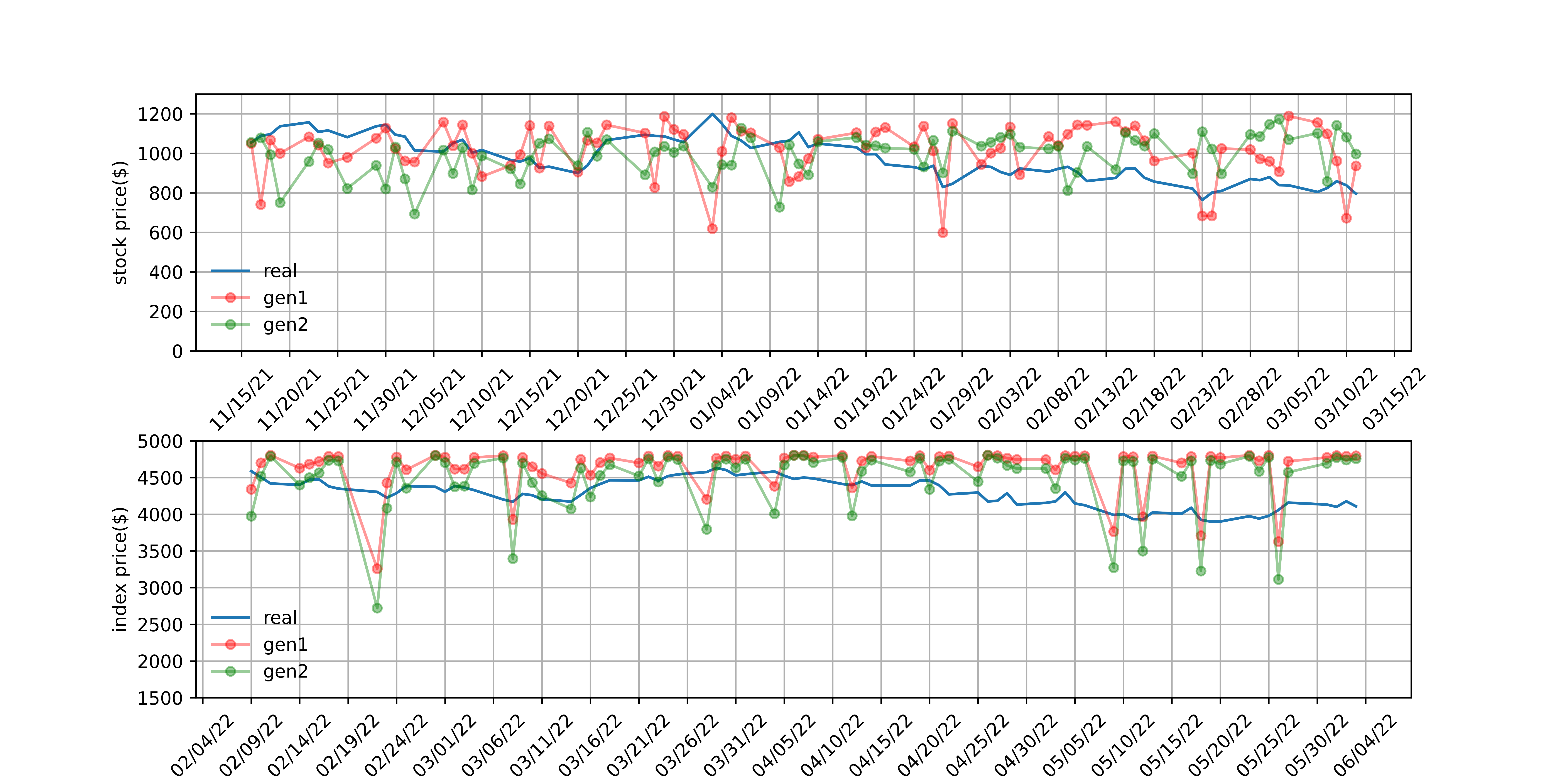}
  \caption{Stock and index prediction for the following 80 days. The upper figure shows prediction results for TSLA, the blue line denotes the real stock price of TSLA from 11/16/2021 to 3/11/2022, the red dotted line and green dotted line denote two generated samples from generator $G$. The lower figure shows prediction results for index S\&P 500, the blue line denotes real index price from 2/9/2022 to 6/3/2022, the red dotted line and green dotted line indicate two generated samples from generator $G$.}
  \label{fig:pred}
\end{figure}
\par We collect TSLA historical daily stock prices from 3/22/2019 to 4/11/2022 and S\&P 500 index prices from 6/14/2019 to 7/6/2022 for training and prediction. Excluding all the holidays and weekends, there are 771 daily data in total. We set $n$ in $\mathbf{s}_{1,n}$ to 670, the dimension of generator output $T$ to 128, the sample size threshold $N_{1}$ to 270, the size of generation $N_2$ to $1600$ and the proportion parameter $\alpha$ to $0.8$. All the financial data is collected from Bloomberg database.
\par We follow the procedures of fake price generation in algorithm~\ref{alg:GAN-MC} and algorithm~\ref{alg:GAN-MC future}. Then we form the index set $\mathcal{I}_\alpha$ and pick two generated price tracks as prediction in comparison to real market data. As shown in fig~\ref{fig:pred}, the red dotted line and green dotted line indicate two generated samples $G(\mathbf{Z}_i),G(\mathbf{Z}_j)$ where $i,j \in\mathcal{I}_\alpha$. Overall the generated fake stock tracks are more turbulent than real stock prices, there are sharp decreases and sharp increases between two single day. In TSLA stock price prediction, the generated samples distribute around real stock prices. And as for pricing, the variation between different generations would make Monte Carlo estimation more accurate. An interesting discovery in S\&P 500 index price prediction is that two generated data share a same trend, which means they have a large linear correlation. The trends are similar but the predicted index prices are not exactly the same for each day. The prediction experiments show that our GAN could generate fitful samples for pricing.

\subsection{Option Pricing}
Before showing our experiments on option pricing, we first introduce other basic models for option pricing.
\subsubsection{Black–Scholes}
The Black-Scholes model, also known as the Black-Scholes-Merton (BSM) model estimates the theoretical value of derivatives based on other investment instruments, taking into account the impact of time and other risk factors. Raised by Fisher and Myron~\cite{black2019pricing}, the Black-Scholes model for option pricing assumes no dividends are paid out during the life of the option, markets are random, there are no transaction costs in buying the option, the risk-free rate and volatility of the underlying asset are known and constant, the returns of the underlying asset are normally distributed and the option can only be exercised at expiration. Then if we denote $C$ as the call option price, $N(x)$ denotes the standard normal cumulative distribution function, $N(x)=\frac{1}{\sqrt{2\pi}}\int_{-\infty}^{x}e^{-z^2/2}dz$. And we denote $X$ as the strike price, $s_{t}$ as the spot price at time $t$, $T_1$ as the date of option expiration, $r$ as annual risk-free interest rate, $\sigma$ as the standard deviation of the stock's returns, which is known as volatility. Then the formula for call option
\begin{equation*}
    C  = N(d_1)s_t - N(d_2)Xe^{-r(T_1-t)}
\end{equation*}where
\begin{align*}
    d_1 & = \frac{1}{\sigma \sqrt{T_1-t}}\left[\log\left(\frac{s_t}{X}\right) +\left(r+\frac{\sigma^2}{2}\right)(T_1-t) \right]\\
    d_2 & = d_1 - \sigma\sqrt{T_1- t}
\end{align*}And the formula for put option
\begin{equation*}
    P = N(-d_2)Xe^{-r(T_1-t)} - N(-d_1)s_t
\end{equation*}where $d_1,d_2$ are same defined in call option.
\par In implementing Black-Scholes formula, we actually use the pricing tools in Bloomberg where volatility is automatically calculated according to market data.

\subsubsection{Radial Basis Function Network}
As mentioned in the work by Hutchinson and Poggio~\cite{hutchinson1994nonparametric}, the Radial Basis Function(RBF) network could be used for data fitting\cite{poggio1990networks}. If we replace the Euclidean norm with matrix norm, the fitting process could be expressed as
\begin{equation*}
     \hat{f}(x) = \sum_{i=1}^{k}c_{i}h_{i}((x-z_{i})^{\top}W^{\top}W(x-z_{i}))+\alpha_{0} + \alpha_{1}^{\top}x
\end{equation*}where $W,c_i,z_i,\alpha_0,\alpha_1$ are parameters to be optimized on, and $h_i$ is the basis function, which could be either Gaussian or multiquadric. In pricing model we add one more sigmoid layer as output. The augmented network will be of the form $g(\hat{f}(x))$ where $g(u)=1/(1+e^{-u})$.
\par In implementing RBF network, we set the basis function as multiquadric and set the input for the model as $\hat{g}(s/X,1,T_1-t)$ where $s$ is stock price, $X$ is strike price and $T_1 - t$ is the time until maturity.

\subsubsection{Multilayer Perceptrons Regression}
Still in the paper by Hutchinson and Poggio~\cite{hutchinson1994nonparametric}, Multilayer Perceptrons(MLPs)~\cite{rumelhart1985learning} are classical methods for high dimension regression. Consisting of fully connected networks, a general formulation of MLPs with univariate output could be written as
\begin{equation*}
    \hat{f}(x) =h\left(\sum_{i=1}^{k}\delta_{i}h(\beta_{0i}+\beta_{1i}^{\top}x)+\delta_{0}\right)
\end{equation*}Where $h(\cdot)$ is the sigmoid function, and $\delta_i,\delta_0,\beta_{0i},\beta_{1i}$ are parameters to be optimized on. Unlike the RBF network, the nonlinear function $h$ in MLP is usually fixed for the entire network.
\par In implementing MLP regression, we set the input of the model as $\hat{f}(s/X,1,T_1-t)$ where $s$ is stock price, $X$ is strike price and $T_1 - t$ is the time until maturity.

\subsubsection{Projection Pursuit Regression}
Still mentioned in the work of Hutchinson and Poggio~\cite{hutchinson1994nonparametric}, projection pursuit regression(PPR)~\cite{friedman1981projection} was developed for high-dimensional regression. PPR models are composed of projections of the data and estimating nonlinear combining functions from data. The regression model could be stated as
\begin{equation*}
    \hat{f}(x) = \sum_{i=1}^{k}\delta_{i}h_{i}(\beta_i^\top x) +\delta_{0}
\end{equation*}where $h_i$ are functions estimated from data, $\delta_i,\delta_0,\beta_i$ are parameters to be optimized on, and $k$ is the number of projections.
\par In implementing PPR, we set the input of the model as $\hat{f}(s/X,1,T_1-t)$ where $s$ is stock price, $X$ is strike price and $T_1 - t$ is the time until maturity.

\subsubsection{Monte Carlo}
In fact there are many methods of Monte Carlo estimation for option pricing, and the main difference lays on the assumptions of stock price generation. Here we adapt the methods from Kevin~\cite{brewer2012geometric}. We assume that the process of stock price follows geometric Brownian motion. If we denote $s$ as stock price, $t$ as time, then 
\begin{equation*}
    ds = \mu sdt + \sigma sdz \quad dz=\epsilon\sqrt{t}
\end{equation*}where $\mu$ is the drift parameter, $\sigma$ is volatility parameter, $\epsilon$ is a random draw from standard normal distribution. Given stock price at time $t$, we relate drift parameter to expected stock price and exercise price $\mu = \frac{1}{T_1 - t}\log(X/s_t)$ where $T_1$ is the date of option expiration. Following the stochastic process, we generate $N_2$ different stock price tracks $(s^{(i)}_t,\tilde{s}^{(i)}_{t+1},\cdots,\tilde{s}^{(i)}_{T_1})_{i=1}^{N_2}$ for pricing. Then if we denote $\Delta t$ as time unit, r as annual risk-free interest rate, the estimated price $\widehat{C}=(1+r\Delta t)^{T_1 - t}\frac{1}{N_2}\sum_{i=1}^{N_2}\max(\tilde{s}^{(i)}_{T_1}-X,0)$ and $\widehat{P}=(1+r\Delta t)^{T_1 - t}\frac{1}{N_2}\sum_{i=1}^{N_2}\max(X-\tilde{s}^{(i)}_{T_1},0)$ for European option pricing and take the average of lower and upper bounds for American option pricing.
\par In implementing Monte Carlo, we collect the volatility parameter for each option from Bloomberg.

\subsubsection{Experiments}
We first test our model on option pricing, which covers European and American options, call and put options. As for American options, we collect TSLA historical daily stock prices data from 3/22/2019 to 1/27/2022 for GAN training and Monte Carlo estimation, excluding all the holidays and weekends, there are 720 daily data in total. We collect TSLA call option data with strike price, last deal price and expiration date from 2/25/2022 to 3/10/2022, there are 8 different expiration dates for each day, and there are 10 different strike prices for each expiration date. We collect TSLA put option data from 3/21/2022 to 4/1/2022 and the other setup is same to call. Thus we have 800 option data for call and put, we use 720 data for training RBF network, MLP regression, PPR and Linear Regression, and the rest for testing.
\par We set the dimension of generator $T$ to 128, the sample size threshold $N_1$ to 290, the size of generation $N_2$ to 5120 and the proportion parameter $\alpha$ to 0.8. We collect the annual risk-free interest rate from Bloomberg. We use mean absolute percentage error(MAPE) as the metric for model evaluation, where
\begin{equation*}
    \text{MAPE}=\frac{100\%}{N_{\text{test}}}\sum_{j=1}^{N_{\text{test}}}\frac{|\widehat{V}-V|}{V}
\end{equation*}where $\widehat{V}$ is the predicted call or put option price, $V$ is the real option price and $N_{\text{test}}$ is the size of test set.

\begin{table}[htbp]
\centering
\begin{tabular}{ cc }
\begin{tabular}{|c|c|}
\hline
Model & MAPE\\
\hline
GAN-MC & 1.42\%\\
MC & 2.55\% \\
*BS & 1.33\%\\
RBF Network & 8.75\%\\
MLP Regression& 6.42\%\\
PPR & 4.52\%\\
LR & 10.3\%\\
LR-ITM & 10.88\%\\
LR-OTM & 10.55\%\\
\hline
\end{tabular} &
\begin{tabular}{|c|c|}
\hline
Model & MAPE\\
\hline
*GAN-MC & 1.02\%\\
MC & 1.91\% \\
BS & 2.82\%\\
RBF Network & 12.1\%\\
MLP Regression& 9.46\%\\
PPR & 10.69\%\\
LR & 15.32\%\\
LR-ITM & 12.59\%\\
LR-OTM & 17.25\%\\
\hline
\end{tabular}
\end{tabular}
\caption{\label{tab:table1}Performance table for TSLA option pricing. Here GAN-MC means our model, MC means Monte Carlo estimation, BS is the abbreviation of Black-Scholes model, LR is the linear model using all the option data, LR-ITM is In-the-Money Linear Model, LR-OTM is Out-of-the-Money Linear Model. The left table is the performance on call option and the right table collects performance for put option.}
\end{table}
\par As seen from table~\ref{tab:table1}, our model's performance is close to Black-Scholes model on call option pricing and our model reaches state-of-the-art for TSLA put option pricing. For a single price prediction, a smaller variance of $\widehat{C}$ or $\widehat{P}$ would make the prediction more accurate, we set $N_2$ as a relative large number to make the Monte Carlo estimation more accurate. And in both two cases our GAN-MC performs better than MC only model, which means GAN holds better generation capacity for real market data. As for the three non-parametric deep learning models, RBF network, MLP regression and PPR, the performances on TSLA call and put option are quite similar for the equivalence of three representations.
\par Apart from common stock prices, our model could still work on index prices. As for European options, we test our model on S\&P 500 index option price. Similar to TSLA, for call option pricing, we collect S\&P 500 historical daily index prices data from 6/14/2019 to 1/10/2022 for GAN training and Monte Carlo estimation, excluding all the holidays and weekends, there are 650 daily data in total. We collect S\&P 500 call option(SPXW) data with strike price, last deal price and expiration date from 4/6/2022 to 4/28/2022, there are 8 different expiration dates mixed with call option types for each day and 10 different strike prices for each expiration date. After removing the part of SPX, we have 700 S\&P 500 call option data. We use 650 data for training and the rest for testing. As for put option, we collect S\&P 500 historical daily index prices data from 1/6/2020 to 9/30/2022 for GAN training and Monte Carlo estimation, excluding all the holidays and weekends, there are 690 daily data in total. We collect S\&P 500 put option(SPXW) data with strike price, last deal price and expiration date from 10/7/2022 to 10/28/2022, we use 690 data for training and the rest for testing.

\begin{table}[htbp]
\centering
\begin{tabular}{ cc }
    \begin{tabular}{|c|c|}
\hline
Model & MAPE\\
\hline
*GAN-MC & 4.50\%\\
MC & 7.27\% \\
BS & 19.96\%\\
RBF Network & 17.00\%\\
MLP Regression& 14.2\%\\
PPR & 10.40\%\\
LR & 6.72\%\\
LR-ITM & 6.52\%\\
LR-OTM & 7.82\%\\
\hline
\end{tabular}&
\begin{tabular}{|c|c|}
\hline
Model & MAPE\\
\hline
*GAN-MC & 2.68\%\\
MC & 20.61\% \\
BS & 8.20\%\\
RBF Network & 16.83\%\\
MLP Regression& 12.28\%\\
PPR & 19.97\%\\
LR & 10.58\%\\
LR-ITM & 10.39\%\\
LR-OTM & 11.04\%\\
\hline
\end{tabular}
\end{tabular}
    \caption{\label{tab:table2}Performance table for S\&P 500 Weeklys(SPXW) options pricing. The left table is the performance on call option and the right table collects performance for put option.
}
\end{table}
\par As seen from table~\ref{tab:table2}, our model performs best among all the pricing models. Maybe sometimes the last deal price of SPXW will not lay on the range of bid price and ask price, Black-Scholes performs badly on SPXW pricing. And it seems the linear trend is significant in SPXW, therefore linear models perform better than other datasets.

\subsection{Equity Forward or Futures Pricing}\label{sec:eq}
Similar to the methods used in option pricing, we conduct experiments for equity futures pricing by GAN-MC, RBF network, MLP regression, PPR, linear regression and Monte Carlo. We first collect S\&P 500 historical daily index prices data from 6/14/2019 to 4/21/2022 for GAN training and Monte Carlo estimation, excluding all the holidays and weekends, there are 720 daily data in total. Then we collect E-Mini S\&P 500 futures data, which includes ESU22(delivery at 9/16/2022) and ESZ22(delivery at 12/16/2022) from 7/12/2021 to 7/6/2022. In addtion we collect historical E-Mini S\&P 500 futures ESH21(delivery at 3/18/2022) from 1/8/2021 to 3/18/2022. All the futures data includes the last futures price, remaining time before delivery date and S\&P 500 index price. We use 720 data for training RNF network, MLP regression, PPR and linear regression, and the rest for testing. Different from option pricing, we use stock price $s$ and the time until delivery $T_1-t$ as the input variables for non-parametric machine learning models for equity futures pricing.
\par As for Monte Carlo, following the assumption of geometric Brownian motion, we estimate drift parameter for lack of strike prices
\begin{equation*}
    \hat{\mu} = \frac{1}{n} \sum_{t=1}^{n}\frac{(ds)_{t}}{s_{t}dt}  
\end{equation*}where $(ds)_t = s_{t+1} - s_{t}$. And we estimate the volatility parameter as historical volatility
\begin{equation*}
    \hat{\sigma}=\sqrt{\frac{1}{n-1}\sum_{t=1}^{n}(R_{t} - \bar{R})^{2}}
\end{equation*}with $R_t = \log(s_t/s_{t-1}) $ and $\bar{R}$ is the mean of $R_t$. Then the Monte Carlo estimation of equity forward or futures price is given by $\widehat{F}^{\textrm{eq}}=s_t \cdot \exp{\left\{ \left(r-\frac{1}{N_2}\sum_{i=1}^{N_2}\frac{\widehat{D}(T_1)}{\tilde{s}^{(i)}_{T_1}} \right)\frac{T_1-t}{\Delta t} \right\}}$, where $t$ is current date, $T_1$ is the delivery date, $\Delta t$ is time unit and $\tilde{s}^{(i)}_{T_1}$ is the generated stock price in track $(s^{(i)}_{t},\tilde{s}^{(i)}_{t+1},\cdots,\tilde{s}^{(i)}_{T_1})_{i=1}^{N_2}$.

\par We set the dimension of generator $T$ to 128, the sample size threshold $N_1$ to 290, the size of generation $N_2$ to 5120 and the proportion parameter $\alpha$ to 0.8. We collect the annual risk-free interest rate from Bloomberg. And we use mean absolute percentage error(MAPE) as the metric for model evaluation.
\begin{table}[htbp]
\centering
    \begin{tabular}{|c|c|}
\hline
Model & MAPE\\
\hline
*GAN-MC & 0.03\%\\
MC & 0.36\% \\
RBF Network(Gauss) & 0.17\%\\
RBF Network(Sqrt) & 0.31\%\\
MLP Regression & 0.43\%\\
PPR & 0.11\%\\
LR & 0.31\%\\
\hline
\end{tabular}
    \caption{\label{tab:table3}Performance table for E-Mini S\&P 500 futures pricing. Here RBF Network(Gauss) means we use Gaussian as basis functions and RBF Network(Sqrt) means we use multiquadric as basis functions.
}
\end{table}
\par As seen from table~\ref{tab:table3} our model still performs best on equity futures pricing. For a single price prediction, a smaller variance of $\widehat{F}^{\textrm{eq}}$ would make the prediction value more accurate. We set $N_2$ as a relative large number to make the Monte Carlo estimation more accurate. And in all cases our GAN-MC performs better than MC only model, which means GAN holds better stability and generation capacity for market data.

\subsection{Commodity Forward or Futures Pricing}
Quite similar to the methods used in section~\ref{sec:eq}, we conduct experiments for commodity forward contract pricing by GAN-MC, RBF network, MLP regression, PPR, linear regression and Monte Carlo. We first collect LME copper spot daily price data from 12/4/19 to 9/2/22  for GAN training and Monte Carlo estimation, excluding all the holidays and weekends, there are 700 daily data in total. Then we collect LME copper 3 months rolling forward daily price data from 10/17/18 to 9/30/22. All the forward data includes the last forward price, remaining time before delivery date, which is three months and spot price. We use 700 data for training RNF network, MLP regression, PPR and linear regression, and the rest for testing. We use spot price $s$ and the time until delivery as the input variables for non-parametric machine learning models for commodity forward or futures pricing.
\par Similar to the method used in equity futures pricing, the Monte Carlo estimation of commodity forward contract or futures price is given by $\widehat{F}^{\textrm{co}}=\frac{1}{N_2}\sum_{i=1}^{N_2}\tilde{s}^{(i)}_{T_1}+\widehat{P}(t,T_1)\exp{(r(T_1-t))}$ where $t$ is current date, $T_1$ is the settlement date, and $\tilde{s}^{(i)}_{T_1}$ is the generated stock price in track $(s^{(i)}_{t},\tilde{s}^{(i)}_{t+1},\cdots,\tilde{s}^{(i)}_{T_1})_{i=1}^{N_2}$.
\par
We set the dimension of generator $T$ to 128, the sample size threshold $N_1$ to 290, the size of generation $N_2$ to 5120, sample size for estimating cost of carry $N_3$ to 50 and the proportion parameter $\alpha$ to 0.8. We collect the annual risk-free interest rate from Bloomberg. And we use mean absolute percentage error(MAPE) as the metric for model evaluation.

\begin{table}[htbp]
\centering
    \begin{tabular}{|c|c|}
\hline
Model & MAPE\\
\hline
*GAN-MC & 0.08\%\\
MC & 0.53\% \\
RBF Network(Gauss) & 0.90\%\\
RBF Network(Sqrt) & 2.33\%\\
MLP Regression & 1.11\%\\
PPR & 1.24\%\\
LR & 1.13\%\\
\hline
\end{tabular}
    \caption{\label{tab:table5}Performance table for LME copper 3 month forward pricing.
}
\end{table}
\par The results in table~\ref{tab:table5} show that our model performs best on commodity forward pricing. If we compare GAN-MC with MC, the better performance of our model proves the generative network's efficiency and capacity during generation. Apart from our model, Monte Carlo and RBF Network(Gauss) models perform greatly on LME copper forward pricing. 

\par Apart from commodity forward contract, GAN-MC could still handle the commodity futures cases. We then test our model on crude oil futures. Similar to copper, we collect Cushing, OK WTI crude oil historical daily spot price from 7/11/2019 to 8/30/2022 for GAN training and Monte Carlo estimation, excluding all the holidays and weekends, there are 700 daily data in total. Then we collect CLV2, which is the WTI crude oil future settled on October 2022 from 2/5/2019 to 2/9/2022, and CLX2, which is settled on November 2022 from 2/5/2020 to 2/9/2022. All the futures data includes the last futures price, remaining time before delivery date and WTI crude oil spot price. We use 700 daily data for training RNF network, MLP regression, PPR and linear regression, and the rest for testing.  
\begin{table}[htbp]
\centering
    \begin{tabular}{|c|c|}
\hline
Model & MAPE\\
\hline
*GAN-MC & 0.58\%\\
MC & 7.44\% \\
RBF Network(Gauss) & 2.59\%\\
RBF Network(Sqrt) & 1.88\%\\
MLP Regression & 4.75\%\\
PPR & 2.26\%\\
LR & 4.29\%\\
\hline
\end{tabular}
    \caption{\label{tab:table4}Performance table for WTI crude oil futures pricing.
}
\end{table}
\par The results in table~\ref{tab:table4} show that our model performs best on commodity futures pricing. Such success results from the capacity, generating ability and the variance reduction properties of GAN-MC. Apart from our model, RBF Network(Sqrt) and PPR models perform greatly on WTI crude oil futures pricing.
\section{Conclusion}
 All the success of our model on different real market derivatives pricing proves the correctness of our GAN-MC model. GAN is a powerful tool for capturing the trend and variation of the underlying asset prices like stock or index price. Monte Carlo could be used for reducing the variance of estimators given independent sequences and efficient for derivatives pricing. Although parametric derivatives pricing formulas are preferred when they are available, our result show that generative model-based Monte Carlo alternatives could be useful substitutes when  arbitrage-based pricing formula or non-parametric pricing model fails. While our results are promising, we can not claim our approach will be successful in general, we have not covered swap and other derivatives pricing yet and we hope to provide a more comprehensive analysis of these alternatives in the near future.

\newpage

\bibliography{refs}

\end{document}